\documentclass[12pt,leqno,letterpaper]{article}

\usepackage{amsmath,amsthm,enumerate,amssymb}
\usepackage[latin1]{inputenc}
\usepackage[T1]{fontenc}
\usepackage[english]{babel}
\usepackage{graphicx}

\newtheorem{theorem}{Theorem}[section]

\newtheorem{lemma}[theorem]{Lemma}

\newtheorem{example}[theorem]{Example}

\newcommand{\tr}{{\rm Tr\hskip -0.2em}~}

\newcommand{\Var}{\rm Var}

\begin{document}

\title{WYD-like skew information measures}
\author{Frank Hansen}
\date{July 18 2012}

\maketitle

\begin{abstract}

Following recent advances in the theory of operator monotone functions we introduce new classes of WYD-like skew information measures.
\end{abstract}

\section{Introduction}

Let $ h $ be a positive operator monotone function defined in the positive half-axis. We suppose $ h $ is  normalized such that $ h(1)=1, $ and that $ h $ is neither constant nor the identity function.

Kawasaki and Nagisa \cite{kn:nagisa:2012} recently proved that the functions
\[
t\to\frac{(t-a)(t-b)}{(h(t)-h(a))(h^\sharp(t)-h^\sharp(b))}\qquad t>0
\]
are operator monotone for $ a,b>0, $ 
where $ h^\sharp $ is the operator monotone function defined by setting
\[
h^\sharp(t)=\frac{t}{h(t)}\qquad t>0.
\]
This is an interesting observation since by setting $a=b=1, $ choosing $ h(t)=t^p $ for $ 0<p<1, $ and multiplying with a positive constant we obtain the operator monotone functions
\begin{equation}\label{functions generating WYD}
t\to p(1-p)\frac{(t-1)^2}{(t^p-1)(t^{1-p}-1)}\qquad t>0.
\end{equation}
These functions were first studied by Hasegawa and Petz \cite{kn:hasegawa:1996} in the context of monotone metrics on the state space of a quantum mechanical system. Monotone metrics are generated by positive normalized operator monotone functions $ f $ defined in the positive half-axis and  satisfying the functional equation $ f(t)=tf(t^{-1})=f^*(t) $ for $ t>0. $ We refer to \cite{kn:hansen:2011:2} for a modern account of the theory of operator monotone functions.

To each monotone metric there is an associated metric adjusted skew information which is a measure of quantum information under the presence of a conservation law. The metric adjusted skew informations associated with the monotone metrics generated by the functions in (\ref{functions generating WYD}) are the much celebrated
Wigner-Yanase-Dyson skew informations defined by
\[
I_\rho(p, A)=-\frac{1}{2}\tr [\rho^p,A][\rho^{1-p},A]\qquad 0<p<1,
\]
where $ \rho $ is a state and $ A $ is a conserved quantity.

The theorem of Kawasaki and Nagisa thus points to the existence of additional families of WYD-like skew informations that may be more computationally tractable than general metric adjusted skew informations.

The proof presented by Kawasaki and Nagisa is rather long and depends heavily on the theory of analytic functions. Following an idea in \cite{kn:hansen:2010:1} we shall present a short proof that also entails a generalization of the theorem.

\begin{theorem}\label{Main theorem}
Let $ h $ be a normalized positive function defined in the positive half-axis, and suppose $ h $ is neither constant nor the identity function. Consider the mathematical rule
\begin{equation}\label{main function}
f(t)=\frac{(t-1)^2}{(h(t)-1)(h^\sharp(t)-1)}\qquad t>0
\end{equation}
which may not define $ f $ everywhere in the positive half-axis.

\begin{enumerate}[(i)]

\item If $ h $ is operator monotone then the rule defines $ f $ as a positive operator monotone function in the positive half-axis. 

\item If $ h $ and $ h^\# $ are operator convex then the rule defines  $ f $ as an operator monotone decreasing function in the positive half-axis.

\end{enumerate}

\end{theorem}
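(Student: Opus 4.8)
The plan is to dissolve the apparent product structure by a single algebraic identity and then reduce both assertions to one lemma about operator monotone functions. Using the defining relation $h(t)h^\sharp(t)=t$ I would first observe that
\[
(h(t)-1)(h^\sharp(t)-1)=h(t)h^\sharp(t)-h(t)-h^\sharp(t)+1=t+1-h(t)-h^\sharp(t),
\]
so that (\ref{main function}) becomes
\[
f(t)=\frac{(t-1)^2}{D(t)},\qquad D(t)=t+1-h(t)-h^\sharp(t).
\]
This makes the two hypotheses transparent. The affine term $t+1$ is both operator convex and operator concave, so the convexity type of $D$ is dictated by $-h$ and $-h^\sharp$. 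In case (i) both $h$ and $h^\sharp$ are positive operator monotone (the transform $\sharp$ preserves operator monotonicity), hence operator concave, so $-h,-h^\sharp$ and therefore $D$ are operator convex; in case (ii) $h,h^\sharp$ are operator convex, so $D$ is operator concave. In either case $D(1)=0$, and a short computation using $h^{\sharp\prime}(1)=1-h'(1)$ gives $D'(1)=0$; convexity (resp. concavity) then forces $D\ge 0$ (resp. $D\le 0$), with a second-order zero at $t=1$.

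Both statements now follow from a single lemma: \emph{if $E$ is operator convex on the positive half-axis with $E\ge 0$ and $E(1)=0$, then $(t-1)^2/E(t)$ is a positive operator monotone function.} Indeed, in (i) I apply it with $E=D$ to conclude that $f$ is operator monotone, while in (ii) I apply it with $E=-D$ to conclude that $-f$ is operator monotone, that is, $f$ is operator monotone decreasing.

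To prove the lemma I would peel off the factor $t-1$ twice, using the classical characterisation that $g$ is operator convex exactly when its divided difference $t\mapsto(g(t)-g(1))/(t-1)$ is operator monotone. Applied to $E$ this gives that $m(t)=E(t)/(t-1)$ is operator monotone, with $m(1)=E'(1)=0$. Since an operator monotone function on the positive half-axis is operator concave, $-m$ is operator convex, and the same characterisation shows that $m(t)/(t-1)=(m(t)-m(1))/(t-1)$ is operator monotone decreasing; being nonnegative it is a Stieltjes function. As the reciprocal of a nonzero Stieltjes function is operator monotone,
\[
\frac{(t-1)^2}{E(t)}=\frac{t-1}{m(t)}=\left(\frac{m(t)}{t-1}\right)^{-1}
\]
is operator monotone, which proves the lemma (and shows $f$ extends to a genuine operator monotone function on all of the positive half-axis).

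The main obstacle is conceptual rather than computational: one must notice that the denominator collapses through $hh^\sharp=t$ to the single function $D$, whose operator-convexity type is read off immediately from the hypotheses, and then that the quotient $(t-1)^2/E$ is precisely arranged so that the divided-difference characterisation can be invoked twice. The only delicate points are the sign of $D$, which is secured by the double zero at $t=1$, and the assertion that the divided difference of an operator monotone function is truly a Stieltjes function; the latter can alternatively be verified directly from the Löwner integral representation if one prefers not to pass through operator concavity.
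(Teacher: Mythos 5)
Your proposal is correct and takes essentially the same route as the paper: the identity $h(t)h^\sharp(t)=t$ collapses the denominator, and your two applications of the Bendat--Sherman divided-difference characterisation (your $m(t)=D(t)/(t-1)$ is exactly $1-g(t)$ in the paper's notation, where $g$ is the sum of the divided differences of $h$ and $h^\sharp$ at $1$), combined with the fact that operator monotone functions are operator concave and a final inversion of a positive operator monotone decreasing function, reproduce the paper's argument step for step. The detour through Stieltjes functions is unnecessary, since the reciprocal of a positive operator monotone decreasing function is directly operator monotone, which is how the paper concludes.
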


\begin{proof} We know under both set of assumptions that $ h $ and $ h^\# $ are continuously differentiable functions, and by differentiation of $ f^\# $ we readily obtain $ h'(1)+(h^\#)'(1)=1. $ For $ t\ne 1 $ the denominator in (\ref{main function}) may be written
\[
\begin{array}{rl}
(h(t)-1)(h^\sharp(t)-1)&=t-h(t)-h^\#(t)+1\\[1.5ex]
&=-\bigl(h(t)-1+h^\#(t)-1-(t-1)\bigr),
\end{array}
\]
and therefore
\[
\frac{(h(t)-1)(h^\sharp(t)-1)}{t-1}=-\left(\frac{h(t)-1}{t-1}+\frac{h^\#(t)-1}{t-1}-1\right).
\]
We may thus write $ f $ on the form
\[
f(t)=-\frac{t-1}{g(t)-1}
\]
where the function
\[
g(t)=\left\{\begin{array}{ll}
                 \displaystyle\frac{h(t)-1}{t-1}+\frac{h^\#(t)-1}{t-1}\quad &t>0,\, t\ne 1\\[2ex]
                 1                                                                     &t=1.
                 \end{array}\right.
\]
is continuous also in $ t=1. $ 

If $ h $ is operator monotone then so is $ h^\# $, see for example \cite[Corollary 4.2]{kn:hansen:2011:2}. We then realize that both $ h $ and $ h^\# $ are operator concave in the positive half-axis, see for example \cite[Corollary 2.2]{kn:hansen:2011:2}. By Bendat and Sherman's theorem \cite{kn:bendat:1955} it follows that the positive function $ g $ is operator monotone decreasing thus operator convex. By once more invoking Bendat and Sherman's theorem we obtain that $ -f(t)^{-1} $ and hence $ f(t) $ is operator monotone.

If $ h $ and $ h^\# $ are assumed operator convex then $ g $ is operator monotone in the positive half-axis by Bendat and Sherman's theorem. As above it follows that  $ g $ is operator concave. By once more invoking Bendat and Sherman's theorem we obtain that $ f(t) $ is well-defined and operator monotone decreasing in the positive half-axis.
\end{proof}

If the function $ h $ in the above theorem is operator monotone then both $ 0<h'(1)<1 $ and $ 0<(h^\#)'(1)<1, $ see for example
\cite[Corollary 4.7]{kn:hansen:2011:2}. In addition $ f(0)=\lim_{t\to 0} f(t)>0. $

Symmetry of a monotone metric \cite{kn:petz:1996:2} is expressed through the functional equation $ f(t)=tf(t^{-1})=f^*(t) $ for the generating operator monotone function.  We also say that a function $ f $ defined in the positive half-axis is symmetric if $ f^*=f. $ It is important to know when functions written on the form (\ref{main function})  are symmetric. The following lemma is a small exercise left to the reader.

\begin{lemma}
Functions $ f $ written on the form (\ref{main function}) are symmetric $ (f=f^*) $ if either $ h=h^* $ or $ h=\tilde h, $ where
$ \tilde h(t)=h(t^{-1})^{-1} $ for $ t>0. $
\end{lemma}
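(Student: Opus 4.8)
The plan is to verify the defining symmetry relation $f(t)=tf(t^{-1})$ directly, exploiting that the rule (\ref{main function}) is built symmetrically out of $h$ and $h^\sharp$. First I would eliminate the auxiliary function: substituting $h^\sharp(t)=t/h(t)$ into (\ref{main function}) and simplifying gives the purely rational expression
\[
f(t)=\frac{(t-1)^2\,h(t)}{(h(t)-1)\,(t-h(t))},\qquad t\neq 1.
\]
Writing $F[h]$ for the function produced from a given $h$ by this rule, I would then record two structural facts. The first is a swap symmetry: since $(h^\sharp)^\sharp=h$, the rule is invariant under interchanging $h$ and $h^\sharp$, so $F[h]=F[h^\sharp]$. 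The second, which is the heart of the matter, is that the $*$-operation on $f$ corresponds to replacing $h$ by $\tilde h$, namely $F[h]^*=F[\tilde h]$.

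To establish $F[h]^*=F[\tilde h]$ I would compute $F[h]^*(t)=t\,F[h](t^{-1})$ by substituting $t\mapsto t^{-1}$ in the rational form above, the only input being the definition $\tilde h(t)=h(t^{-1})^{-1}$. After clearing denominators this becomes a rational identity in $t$, $h(t)$ and $h(t^{-1})$, which collapses to $F[\tilde h](t)$.

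With these two identities in hand the two cases follow at once. If $h=\tilde h$, then $f^*=F[\tilde h]=F[h]=f$. If instead $h=h^*$, i.e.\ $h(t)=t\,h(t^{-1})$, then $h(t^{-1})=h(t)/t$, whence
\[
\tilde h(t)=h(t^{-1})^{-1}=\frac{t}{h(t)}=h^\sharp(t);
\]
combining the two structural facts then yields $f^*=F[\tilde h]=F[h^\sharp]=F[h]=f$. In either case $f=f^*$, and the relation at $t=1$ holds trivially by continuity.

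The main obstacle is purely bookkeeping: verifying $F[h]^*=F[\tilde h]$ means tracking the factors $(t-h(t))$ and $(1-t\,h(t^{-1}))$ and their signs under $t\mapsto t^{-1}$, where a single sign error would spoil the identity. No analytic machinery enters here—operator monotonicity and convexity play no role—so the whole argument stays at the level of elementary rational algebra, consistent with the lemma being a short exercise.
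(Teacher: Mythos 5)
Your proof is correct, and since the paper leaves this lemma as ``a small exercise left to the reader,'' your direct algebraic verification is exactly the intended argument. The two structural identities you isolate do check out: $F[h]=F[h^\sharp]$ follows from $(h^\sharp)^\sharp=h$ and the symmetric role of the two factors in (\ref{main function}), while the substitution $t\mapsto t^{-1}$ in your rational form gives $tF[h](t^{-1})=\frac{(t-1)^2\,\tilde h(t)}{(\tilde h(t)-1)(t-\tilde h(t))}=F[\tilde h](t)$ as claimed, and your reduction of the case $h=h^*$ to $\tilde h=h^\sharp$ (via $h(t^{-1})=h(t)/t$) correctly closes the argument.
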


The functions $ h(t)=t^p $ appearing in the WYD-metrics satisfy $ h=\tilde h. $

The positive operator monotone functions defined in the positive half-axis satisfying $ h=h^* $ are characterized in 
\cite[Theorem 1]{kn:hansen:2006:2}, cf. also \cite{kn:hansen:2008:1,kn:hansen:2010:1,kn:hansen:2011:1}.

The positive operator monotone functions defined in the positive half-axis satisfying $ h=\tilde h $ are characterized in \cite[Theorem 1.1]{kn:hansen:1981}.

\section{WYD-like skew information measures}

Let $ c $ be the Morozova-Cencov function of a regular monotone metric. It is given on the form
 \[
c(x,y)=\frac{1}{y f(xy^{-1})}\qquad x,y>0,
 \]
where $ f $ is a positive normalized operator monotone function defined in the positive half-axis such that
$ f^*=f $ and $ f(0)>0. $  

The corresponding {\it metric adjusted skew information} is defined by
\begin{equation}\label{metric adjusted skew information}
I^c_\rho(A) = \frac{m(c)}{2} K^c_\rho(i[\rho, A], i[\rho, A]),
\end{equation}
where $ m(c)=f(0) $ is the metric constant, and $ K_\rho^c $ is the monotone metric generated by $ f, $
cf. \cite{kn:petz:1996:2,kn:hasegawa:1996,kn:hansen:2008:1}.
It may be extended \cite[Theorem 3.8]{kn:hansen:2008:1} to positive semi-definite states $ \rho. $ The metric adjusted skew information may be written on the convenient form
\begin{equation}\label{representation of metric adjusted skew information}
I^c_\rho(A)=\frac{m(c)}{2}\displaystyle\tr A\, \hat c(L_\rho,R_\rho)A,
\end{equation}
where
\begin{equation}\label{c hat}
\hat c(x,y)=(x-y)^2 c(x,y)\qquad x,y>0,
\end{equation}
and $ L_\rho $ and $ R_\rho $ denote left and write multiplication with $ \rho. $ 

It is because of this expression that functions written on the form (\ref{main function}) lead to simplified formulas. Indeed, after normalization of $ f, $ the corresponding function in (\ref{c hat}) takes the form
\[
\hat c(x,y)=\frac{1}{h'(1)(1-h'(1))}\bigl(x+y-yh(xy^{-1})-yh^\#(xy^{-1})\bigr)
\]
expressed additively in terms of the perspectives of $ h $ and $ h^\#. $ If we insert these functions in (\ref{representation of metric adjusted skew information}) we obtain:

\begin{theorem}\label{WYD-like skew informations}
Let $ f $ be a symmetric function given on the form
\[
f(t)=h'(1)(1-h'(1))\frac{(t-1)^2}{(h(t)-1)(h^\sharp(t)-1)}\qquad t>0,
\]
where $ h $ is a normalized operator monotone function. The corresponding metric adjusted skew information may be written
\begin{equation}\label{simplified MASI}
I^c_\rho(A) =\frac{1}{2(1-h(0))}\left(2\tr\rho A^2-\tr\bigl(Ah(\Delta_\rho)A\rho + Ah^\#(\Delta_\rho) A\rho\bigr)\right),
\end{equation}
where $ \Delta_\rho $ is the modular operator given by
\[
\Delta_\rho A=\rho A \rho^{-1}.
\]
\end{theorem}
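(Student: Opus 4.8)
The plan is to derive the simplified formula by substituting the explicit form of $\hat c$ into the representation \eqref{representation of metric adjusted skew information} and then recognizing the resulting operators as perspectives of $h$ and $h^\#$ evaluated at the modular operator. First I would record that, since $f$ is normalized and symmetric, the metric constant is $m(c)=f(0)=h'(1)(1-h'(1))\cdot f_0$ where $f_0$ is the limit computed from the theorem's remark; but more directly, the prefactor in the displayed formula is $1/(2(1-h(0)))$, so the main bookkeeping task is to confirm that $m(c)/2$ together with the normalization constant $h'(1)(1-h'(1))$ collapses to $1/(2(1-h(0)))$. This uses the relation $h'(1)+(h^\#)'(1)=1$ established in the proof of Theorem~\ref{Main theorem}, together with the evaluation of $f(0)$ in terms of $h(0)$.

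Next I would take the additive expression for $\hat c$ already supplied in the text,
\[
\hat c(x,y)=\frac{1}{h'(1)(1-h'(1))}\bigl(x+y-yh(xy^{-1})-yh^\#(xy^{-1})\bigr),
\]
and substitute it into \eqref{representation of metric adjusted skew information} with $x=L_\rho$ and $y=R_\rho$. The key observation is that $\hat c(L_\rho,R_\rho)$ splits into four terms. The terms $x=L_\rho$ and $y=R_\rho$ contribute $\tr A(L_\rho+R_\rho)A=2\tr\rho A^2$ after using the trace cyclicity, which accounts for the $2\tr\rho A^2$ appearing in \eqref{simplified MASI}. The remaining two terms are the perspective expressions $yh(xy^{-1})$ and $yh^\#(xy^{-1})$, which under the functional-calculus identification $L_\rho R_\rho^{-1}=\Delta_\rho$ and $R_\rho=\rho$ (acting by right multiplication) become exactly $h(\Delta_\rho)\,\rho$ and $h^\#(\Delta_\rho)\,\rho$ respectively.

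The step I expect to be the main obstacle is making the perspective-to-modular-operator translation fully rigorous: one must justify that for a function $h$ and commuting positive operators $L_\rho,R_\rho$ the perspective $R_\rho\,h(L_\rho R_\rho^{-1})$ coincides with $h(\Delta_\rho)R_\rho$ as operators on the matrix space, and then that the trace $\tr A\,R_\rho\,h(\Delta_\rho)A$ equals $\tr A\,h(\Delta_\rho)A\,\rho$. Since $L_\rho$ and $R_\rho$ commute, this reduces to a spectral-decomposition computation in the eigenbasis of $\rho$, where $\Delta_\rho$ acts diagonally with eigenvalues $\lambda_i/\lambda_j$; writing $A=(a_{ij})$ the verification becomes the elementary matrix-entry identity $\sum_{ij}|a_{ij}|^2\lambda_j\,h(\lambda_i/\lambda_j)$. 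The remaining work is routine: collect the four contributions, insert the constant computed in the first step, and read off \eqref{simplified MASI}.
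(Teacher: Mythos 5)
Your proposal is correct and is essentially the paper's own argument: the paper derives the additive perspective form of $\hat c$ just before the theorem and obtains \eqref{simplified MASI} precisely by the insertion into \eqref{representation of metric adjusted skew information} that you carry out. The details you supply --- the split $\tr A(L_\rho+R_\rho)A=2\tr\rho A^2$, the identification $R_\rho h(L_\rho R_\rho^{-1})=h(\Delta_\rho)R_\rho$ verified in the eigenbasis of $\rho$, and the constant bookkeeping $m(c)/2=f(0)/2=h'(1)(1-h'(1))/(2(1-h(0)))$, which tacitly uses $h^\#(0)=0$ just as the paper's prefactor does --- are exactly the routine steps the paper leaves implicit.
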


\begin{example}
If we chose $ h(t)=t^p $ for $ 0<p<1 $ then the expression in (\ref{simplified MASI}) reduces to
\[
\begin{array}{rl}
I_\rho^c(A)&=\displaystyle\frac{1}{2}\left(2\tr\rho A^2-\tr\bigl(A(\Delta_\rho)^pA\rho + A(\Delta_\rho)^{-p}\rho A\bigr)\right)\\[2.5ex]
&=\displaystyle\frac{1}{2}\left(2\tr\rho A^2-\tr\bigl(A\rho^pA\rho^{1-p}+A\rho^{1-p}A\rho^p\bigr)\right)\\[2.5ex]
&=\displaystyle-\frac{1}{2}\tr [\rho^p,A]\cdot[\rho^{1-p},A]
\end{array}
\]
and we recover, as expected, the Wigner-Yanase-Dyson skew information measures \cite{kn:wigner:1963,kn:lieb:1973:1}.

\end{example}

\subsection{Examples of WYD-like skew information measures}

The functions,
\[
f_\alpha(t)=t^\alpha\left(\frac{1+t}{2}\right)^{1-2\alpha}\qquad t>0,
\]
studied in \cite{kn:hansen:2006:2} provide, for $ 0\le\alpha\le 1, $ a monotonous bridge between the largest and the smallest monotone metrics. They satisfies $ f_\alpha^*=f_\alpha $ and $ f_\alpha^\#=f_{1-\alpha} $ for $ 0\le\alpha\le 1. $ We therefore obtain

\begin{theorem}
The formula
\[
I_\rho^\alpha(A)=\tr\rho A^2-\frac{1}{2}\tr\Bigl[ A\Bigl(\frac{1+\Delta_\rho}{2}\Bigr)^{1-2\alpha}\rho^\alpha A \rho^{1-\alpha}+ A\Bigl(\frac{1+\Delta_\rho}{2}\Bigr)^{-1+2\alpha}\rho^{1-\alpha} A \rho^\alpha\Bigr]
\]
defines for $ 0\le\alpha\le 1 $ a metric adjusted skew information.

\end{theorem}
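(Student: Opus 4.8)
The plan is to obtain the formula as the special case of Theorem~\ref{WYD-like skew informations} in which the generating function $h$ is taken to be $f_\alpha$ itself. First I would check that $h=f_\alpha$ meets the hypotheses of that theorem. For $0\le\alpha\le 1$ the function $f_\alpha$ is positive, normalized since $f_\alpha(1)=1$, and neither constant nor the identity; its operator monotonicity is the substantive fact recorded in \cite{kn:hansen:2006:2}, being precisely what makes the $f_\alpha$ generate monotone metrics. Since $f_\alpha^*=f_\alpha$ we are in the case $h=h^*$ of the preceding Lemma, so the function $f$ produced from $h$ by the rule (\ref{main function}) is symmetric; Theorem~\ref{Main theorem}(i) together with the remark following it then shows that $f$ is a positive operator monotone function with $f(0)>0$, hence the generating function of a regular monotone metric. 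Thus $f$ satisfies every hypothesis of Theorem~\ref{WYD-like skew informations}.

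With this in hand the argument is essentially a substitution. The relation $f_\alpha^\#=f_{1-\alpha}$ means $h=f_\alpha$ and $h^\#=f_{1-\alpha}$, so (\ref{simplified MASI}) produces exactly two terms, one built from $f_\alpha(t)=t^\alpha\bigl((1+t)/2\bigr)^{1-2\alpha}$ and one from $f_{1-\alpha}(t)=t^{1-\alpha}\bigl((1+t)/2\bigr)^{2\alpha-1}$. For $0<\alpha<1$ one has $f_\alpha(0)=0$, so the prefactor $1/\bigl(2(1-h(0))\bigr)$ in (\ref{simplified MASI}) collapses to $1/2$, giving the leading term $\tr\rho A^2$ and the coefficient $-\tfrac12$ in front of the bracket.

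It then remains to put each trace $\tr\bigl(Ah(\Delta_\rho)A\rho\bigr)$ into the factored shape displayed in the statement. Writing $\Delta_\rho=L_\rho R_\rho^{-1}$ one factors $f_\alpha(\Delta_\rho)=\bigl((1+\Delta_\rho)/2\bigr)^{1-2\alpha}L_\rho^{\alpha}R_\rho^{-\alpha}$ and uses that $R_\rho$ commutes with every function of $\Delta_\rho$; the trailing right multiplication by $\rho$ then absorbs $R_\rho^{-\alpha}$ into $\rho^{1-\alpha}$, which turns the $h$-term into $\tr\bigl[A((1+\Delta_\rho)/2)^{1-2\alpha}\rho^\alpha A\rho^{1-\alpha}\bigr]$ and, symmetrically, the $h^\#$-term into the second bracketed expression. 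This last step is routine once the commutation is noticed, so the only genuine input is the operator monotonicity imported from \cite{kn:hansen:2006:2}. The one point I would treat with care is the behaviour at the endpoints $\alpha\in\{0,1\}$, where $f_\alpha$ and $f_\alpha^\#$ exchange the roles of vanishing and non-vanishing at $0$ and the constant $1-h(0)$ no longer equals $1$; here I would lean on the manifest invariance of the final formula under $\alpha\leftrightarrow 1-\alpha$ together with a continuity argument in $\alpha$ to cover the boundary.
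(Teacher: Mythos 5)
Your proposal follows exactly the paper's own route: the paper obtains this theorem by nothing more than the substitution $h=f_\alpha$ in Theorem~\ref{WYD-like skew informations}, using the recorded identities $f_\alpha^*=f_\alpha$ (which gives symmetry of the resulting $f$ via the Lemma, case $h=h^*$) and $f_\alpha^\#=f_{1-\alpha}$, together with the operator monotonicity of $f_\alpha$ imported from \cite{kn:hansen:2006:2}. Your verification of the hypotheses (normalization, symmetry, regularity $f(0)>0$ via the remark after Theorem~\ref{Main theorem}) and the commutation computation $f_\alpha(\Delta_\rho)A\rho=\bigl((1+\Delta_\rho)/2\bigr)^{1-2\alpha}\rho^\alpha A\rho^{1-\alpha}$ are precisely the details the paper suppresses behind ``we therefore obtain.'' For $0<\alpha<1$ your argument is complete and correct.

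The one place where you go beyond the paper is also the one shaky step: the endpoints, and there your symmetry-plus-continuity patch does not actually work. For $0<\alpha<1$ the normalized function produced by the rule (\ref{main function}) from $h=f_\alpha$ has metric constant $f(0)=\tfrac14\bigl((1-f_\alpha(0))(1-f_{1-\alpha}(0))\bigr)^{-1}=\tfrac14$, since $f_\alpha(0)=f_{1-\alpha}(0)=0$; but at $\alpha=0$ the rule returns $f(t)=(1+t)/2$ with $f(0)=\tfrac12$, so the metric constant $m(c)=f(0)$ jumps in the limit. Consequently the $\alpha\to 0$ limit of your formula is one \emph{half} of the metric adjusted skew information attached to the SLD function $f_0$, and a fixed multiple $\lambda\ne 1$ of a metric adjusted skew information is never again one: $\lambda f(0)/f(t)=g(0)/g(t)$ together with $f(1)=g(1)=1$ forces $g=f$ and $\lambda=1$. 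The same discrepancy is visible directly in (\ref{simplified MASI}): at $\alpha=0$ one has $h(0)=\tfrac12$, so the prefactor there is $1$, not $\tfrac12$, and the displayed expression is half the corresponding $I^c_\rho(A)$. To be fair, the paper itself asserts the theorem for $0\le\alpha\le 1$ by the same substitution and silently glosses over this factor-of-two issue at $\alpha\in\{0,1\}$; you deserve credit for noticing that the endpoints need separate treatment, but a literal proof should either restrict to the open interval $0<\alpha<1$ or record the endpoint renormalization explicitly, rather than appeal to continuity in $\alpha$.
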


The skew information measures defined in the preceding theorem share all of the general properties of the Wigner-Yanase-Dyson skew information measures, cf. \cite{kn:hansen:2008:1} and \cite[Theorem 1.2]{kn:hansen:2010:1}. In particular, they are non-negative quantities bounded by the variance $ \Var_\rho(A), $ and they are convex functions in the state variable $ \rho. $

Notice that $ I_\rho^{1/2}(A) $ is the Wigner-Yanase skew information.

The simplest operator monotone functions $ f $ satisfying $ f=\tilde f $ are given by $ f(t)=t^p $ for $ 0\le p\le 1. $ It is rather hard to write down any other example although they exist in abundance. The function
\[
f(t)=\left(\frac{t}{1+t}\right)^{1+t}(1+t)^{1+\frac{1}{t}}\qquad t>0
\]
is an example of an operator monotone function satisfying the functional equation $ f(t)=\tilde f(t)=f(t^{-1})^{-1}, $ cf. \cite{kn:hansen:1981}. Therefore, also this function generates, through Theorem~\ref{Main theorem} and Theorem~\ref{WYD-like skew informations}, a WYD-like skew information measure.

\subsection{Unbounded measures of skew information}

It is the condition $ f(0)>0 $ for the operator monotone function generating the underlying monotone metric that makes the associated metric adjusted skew information bounded. A monotone metric with this property $ (f(0)>0) $ is called regular. 

The metric adjusted skew information vanishes if the underlying metric is not regular. It was therefore proposed in \cite{kn:hansen:2010:1} to simply remove the factor $ f(0)/2 $ and define the so-called unbounded metric adjusted skew information
\begin{equation}\label{unbounded metric adjusted skew information}
I^c_\rho(A) =  K^c_\rho(i[\rho, A], i[\rho, A]),
\end{equation}
when the metric is not regular. This may happen when the generating function $ h $  in Theorem~\ref{Main theorem} is operator convex. 

An unbounded metric adjusted skew information shares all the general properties of metric adjusted skew informations including convexity in the state variable. The only exception is boundedness. It can therefore not be extended from the state manifold to the state space.

If we choose $ h(t)=t^p $ for $ -1<p<0 $ or $ 1<p<2 $ then both $ h $ and $ h^\# $ are operator convex and $ h=\tilde h. $ The associated function constructed in (\ref{main function}) takes after normalization the form
\[
f(t)=p(1-p)\frac{(t-1)^2}{(t^p-1)(t^{1-p}-1)}\qquad t>0.
\]
Since the normalization changes the sign we obtain that $ f $ is operator monotone, and since $ f(0)=0 $
it generates the unbounded metric adjusted skew information
\begin{equation}\label{extension of WYD}
I_\rho^c(A)=\frac{-1}{p(1-p)}\tr [\rho^p,A]\cdot[\rho^{1-p},A].
\end{equation}

These unbounded extensions of the WYD-skew information measures were already discussed by Hasegawa \cite[Theorem 3]{kn:hasegawa:1993}, but not in the context of monotone metrics. Later they were studied by
Jen\v{c}ov\'{a} and Ruskai \cite{kn:jencova:2010}, cf. also \cite{kn:hansen:2010:1}.


    
{\small

}

\end{document}